\documentclass[preprint,12pt,numbers]{elsarticle}

\usepackage{graphicx}
\usepackage{amssymb}
\usepackage{amsmath}
\usepackage{amsthm}
\usepackage{bm}
\usepackage{xcolor}
\usepackage{multirow}
\usepackage{url}
\usepackage{algorithm}
\usepackage{algpseudocode}

\makeatletter
\newcommand{\figcaption}[1]{\def\@captype{figure}\caption{#1}}
\newcommand{\tabcaption}[1]{\def\@captype{table}\caption{#1}}
\newcommand\footnoteref[1]{\protected@xdef\@thefnmark{\ref{#1}}\@footnotemark}
\makeatother
\newcommand{\R}{\mathbb{R}}

\newcommand{\bbR}{\mathbb{R}}

\newcommand{\norm}[1]{{\left\| {#1} \right\|}}

\theoremstyle{definition}

\newtheorem{proposition}{Proposition}

\newtheorem{lemma}{Lemma}
\newtheorem{rem}{Remark}

\newcommand{\bbE}{\mathbb{E}}

\newcommand{\calN}{\mathcal{N}}
\newcommand{\calU}{\mathcal{U}}

\newcommand{\st}{\mathrm{subject\ to\ }}
\newcommand{\NN}{\mathrm{NN}}
\newcommand{\res}{\mathrm{res}}

\newcommand{\clip}{\mathrm{clip}}
\newcommand{\rmP}{\mathrm{P}}
\newcommand{\rmI}{\mathrm{I}}
\newcommand{\rmD}{\mathrm{D}}
\newcommand{\sat}{\mathcal{S}}

\algrenewcommand\algorithmiccomment[1]{\hfill{\footnotesize // #1}}

\begin{document}

% Elsevier template wrapper (kept empty so that the manuscript body is not altered)
\begin{frontmatter}

\title{Model Predictive Path Integral PID Control for Learning-Based Path Following}

\author[aff1]{Teruki Kato\corref{cor1}}
\ead{teruki.kato.tg@mosk.tytlabs.co.jp}

\author[aff1]{Koshi Oishi}
\author[aff1]{Seigo Ito}

\cortext[cor1]{Corresponding author.
Postal address: Toyota Central R\&D Labs., Inc., Nagakute 480-1192, Japan.
Phone: +81-90-6384-4818}

\affiliation[aff1]{organization={Toyota Central R\&D Labs., Inc.},
            addressline={Nagakute 480-1192},
            country={Japan}}
            
\begin{abstract}
Classical proportional--integral--derivative (PID) control remains widely used in industrial control systems, while model predictive control (MPC) is actively studied to achieve higher performance for systems with nonlinear dynamics.
Model predictive path integral (MPPI) control is a sampling-based MPC method that optimizes control inputs without gradient calculations and can handle non-differentiable models and objective functions.
However, conventional MPPI directly samples control-input sequences, which can produce large temporal input increments and causes the optimization dimension to grow with the prediction horizon.
This study proposes MPPI--PID control, which uses MPPI to optimize PID gains online instead of directly optimizing the control-input sequences.
By replacing high-dimensional input-sequence optimization with low-dimensional gain-space optimization while retaining the PID structure, the proposed formulation improves sampling efficiency and promotes smoother control inputs.
Theoretical analyses are provided for a unified path-integral update, the relation between optimization dimension and effective sample size, and the temporal correlation of input perturbations induced by the PID structure.
The method is evaluated on a learning-based path following of a mini forklift using a residual-learning dynamics model that combines a physical model and a neural network identified from real-machine driving data.
Numerical results show that MPPI--PID improves tracking performance over fixed-gain PID, yields smaller input increments than conventional MPPI, and maintains favorable performance under reduced sampling budgets.
\end{abstract}

\begin{keyword}
Model predictive control \sep
Path integral control \sep
PID control \sep
Learning-based control \sep
Path following \sep
Sampling-based optimization
\end{keyword}

\end{frontmatter}

\section{Introduction}
Classical proportional--integral--derivative (PID) control remains a central and high-impact technology in industrial control, while model predictive control (MPC) is regarded as a highly promising advanced-control technology for future industrial applications \cite{samad2020industry}.
This practical situation motivates control methods that combine the deployability and interpretability of a PID control with the predictive optimization capability of MPC.
Particularly, the online tuning of PID gains inside a receding-horizon framework is attractive because it can improve performance without discarding the controller architecture, which is already familiar in industrial practice.

MPC computes control inputs by solving a finite-horizon optimal control problem at each control step \cite{mpc_handbook}.
Gradient-based optimization is widely used for this purpose, but applying it when the plant model or objective function is non-differentiable is difficult.
Model predictive path integral (MPPI) control \cite{mppi_information} is a representative sampling-based MPC method that avoids gradient calculations and therefore can naturally handle neural-network models \cite{brunton_kutz} and objectives with logical branching \cite{mppi_stl}.
Conversely, conventional MPPI directly samples the control-input sequence.
This direct-sequence optimization can lead to large temporal input increments and causes the optimization dimension to grow in proportion to the prediction horizon.

Several studies have optimized PID gains rather than the control inputs within receding-horizon schemes \cite{pid_mpc_1,pid_mpc_2,pid_mpc_3,pinn_pid}.
These methods retain the PID structure and reduce the number of optimization variables, but their gradient-based formulations are less suitable for non-differentiable models and objective functions.
Reinforcement learning (RL) methods, such as path integral policy improvement (PI$^2$) \cite{pi2}, have also been applied to PID-related policy-parameter learning \cite{pi2_pid}, but they are typically model-free and episode-based rather than model-based and receding horizon.
Table~\ref{tab:method_comparison} summarizes the qualitative position of the proposed method relative to related approaches.

\begin{table}[t]
  \centering
  \small
  \tabcaption{Qualitative comparison of related control methods}
  \label{tab:method_comparison}
  \setlength{\tabcolsep}{5pt}
  \begin{tabular}{@{}p{0.28\linewidth}p{0.14\linewidth}p{0.09\linewidth}p{0.14\linewidth}p{0.24\linewidth}@{}}
    \hline
    Method & Framework & PID law & Nonsmooth\newline objectives & Optimized-variable\newline dimension \\
    \hline
    Standard MPC \cite{mpc_handbook} & MPC & No & No & Horizon-dependent \\
    Standard MPPI \cite{mppi_information} & MPC & No & Yes & Horizon-dependent \\
    MPC--PID \cite{pid_mpc_1,pid_mpc_2,pid_mpc_3,pinn_pid} & MPC & Yes & No & Horizon-independent \\
    MPPI--PID (proposed) & MPC & Yes & Yes & Horizon-independent \\
    PI$^2$--PID \cite{pi2_pid} & RL & Yes & Yes & Episode-length-\newline independent \\
    \hline
  \end{tabular}
\end{table}

This study proposes the MPPI--PID control, which optimizes PID gains at each control step using MPPI.
The gain vector, rather than the control-input sequence, is sampled and updated based on the predicted finite-horizon costs.
Then, the resulting control input is generated by a multivariable PID law.
Thus, the method preserves the PID-controller structure while inheriting the gradient-free optimization capability of MPPI.
Figure~\ref{fig:proposed_method} illustrates the overall idea when the controller is combined with a residual-learning dynamics model.

\begin{figure}[!htbp]
  \centering
  \includegraphics[width=0.95\linewidth]{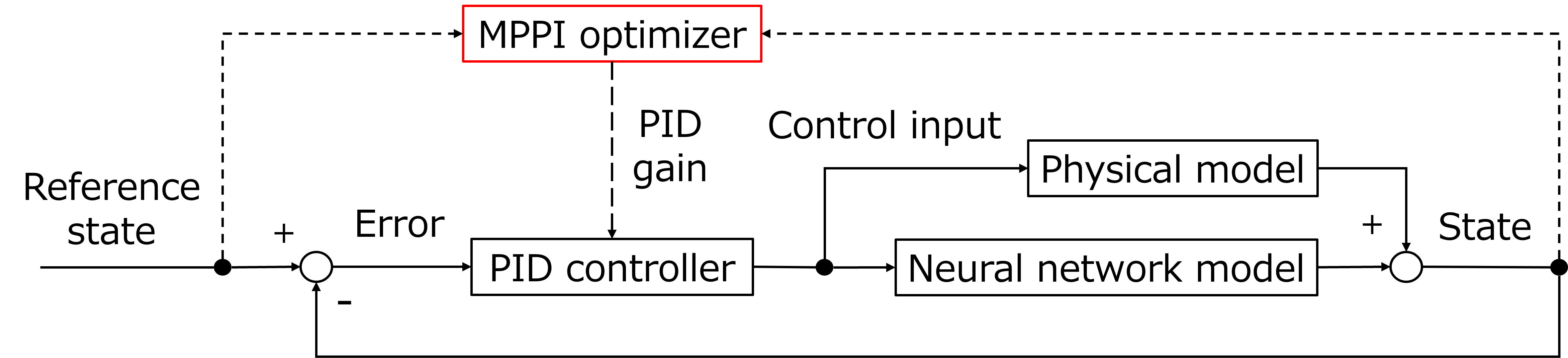}
  \caption{Overview of MPPI--PID with a residual-learning dynamics model}
  \label{fig:proposed_method}
\end{figure}

A summary of the contributions of this paper is the following:
\begin{itemize}
    \item We formulate MPPI--PID, a model-based receding-horizon controller that optimizes PID gains by sampling while retaining the PID control law.
    \item We express conventional MPPI and MPPI--PID through a common abstract-optimization variable. For conventional MPPI, this viewpoint recovers the information-theoretic path-integral update \cite{mppi_information}; for MPPI--PID, the same update is applied to the PID-gain vector.
    \item Building on this unified view, we analyze how the optimization dimension affects effective sample size and how the PID structure induces temporal correlation of control-input perturbations, thereby supporting the expected improvements in sampling efficiency and input smoothness.
    \item We evaluate the method on a learning-based path following of a mini forklift using a residual-learning dynamics model identified from real-machine driving data and compare its performance with the fixed-gain PID control and conventional MPPI.
\end{itemize}

The symbols used in this study are as follows.
The element-wise product of vectors is denoted by $\odot$, and element-wise inequalities are expressed using $\leq$.
The function $\varphi^{\clip}_{[l,u]}:\R^n\to\R^n\ (l, u\in\R^n)$ denotes the function that clips each component $z_j$ of a vector $z\in\R^n$ to the interval $[l_j,u_j]$.
$\calN(\mu,\Sigma)$ denotes a normal distribution with the mean vector $\mu$ and covariance matrix $\Sigma$.
%

% The structure of the remaining paper is as follows. The next section describes the methods implemented in this study. This is followed by the Results, which consists of all results obtained. Then, the Discussion includes an analysis of the results. Lastly, the Conclusion consists of the inferences drawn.
%
\section{Methods}
\subsection{Modeling}
The modeling objective in this study was to exploit known physical structures while learning residual dynamics that are difficult to represent analytically.
We used the following discrete-time residual-learning model as the prediction model for MPC:
\begin{align}
    x_{t+1}=f^\mathrm{phys}( x_t,u_t ) + w^\res(x_t) M^\res \odot f^{\NN,\res}( x_t,u_t )
    =: f( x_t,u_t ).
    \label{eq:residual_model}
\end{align}
Here, $x_t\in\R^{n_x}$ and $u_t\in\R^{n_u}$ denote the state and control input at time step $t$, respectively.
The physical model $f^\mathrm{phys}$ provides the nominal state transition, the neural network $f^{\NN,\res}$ learns the residual, $M^\res\in\{0,1\}^{n_x}$ specifies the state dimensions to which the residual is applied, and $w^\res:\R^{n_x}\to[0,1]$ adjusts the residual contribution depending on the state.
This structure follows the concept of learning corrections to a known model, as in universal differential equations \cite{universal_de}, while also allowing the residual to be restricted to selected state components \cite{pavone_scaramuzza_pinn,tri_pinn,bosch_pinn} or suppressed outside reliable data regions \cite{pinn_domain_of_validity}.

The physical model is first described in continuous time as follows:
\begin{align}
    \dot{x}(t)=f^\mathrm{phys,c}( x(t),u(t) ).
\end{align}
Applying Euler discretization with step size $h$ gives the following:
\begin{align}
    x_{t+1}=x_t+f^\mathrm{phys,c}( x_t,u_t )h =: f^\mathrm{phys}( x_t,u_t ).
\end{align}
For comparison with the residual-learning model, a purely neural discrete-time model can also be written as the following:
\begin{align}
    x_{t+1}=f^\NN( x_t,u_t ).
\end{align}
When some state components are already represented accurately by $f^\mathrm{phys}$, the mask $M^\res$ allows the neural network to correct only the remaining components.
The weighting function $w^\res$ enables the model to prioritize the physical formulation in data-scarce regions while using the learned residual in data-rich regions.

\subsection{Control}
\subsubsection{PID control}
A PID control computes the control input based on proportional, integral, and derivative operations applied to the deviation of the plant state from a reference state.
In this study, we used a multivariable PID controller in which each control-input component could depend on multiple error signals.
The control law is the following:
\begin{align}
    u_{t,j} = u_j^\mathrm{bias} + \sum_{l=1}^{n_j^e}\left(
     K_{j,l}^\rmP e_{t,j,l} + K_{j,l}^\rmI \sum_{\tau=0}^{t} e_{\tau,j,l} h + K_{j,l}^\rmD \frac{e_{t,j,l} - e_{t-1,j,l}}{h}
    \right).
    \label{eq:pid}
\end{align}
Here, $u_{t,j}$ denotes the $j$-th component of the control input at time $t$, with $u_j^\mathrm{bias}$ representing a bias term.
$e_{t,j,l}$ denotes the $l$-th error used for the control input $u_{t,j}$, while $K_{j,l}^\rmP$, $K_{j,l}^\rmI$ and $K_{j,l}^\rmD$ are the PID gains.
When upper- and lower-bound constraints existed on both the control input and its rate of change, $\underline{u} \leq u_t \leq \overline{u}$ and $\underline{\Delta u} \leq u_t - u_{t-1} \leq \overline{\Delta u}$, we applied the following saturation mapping:
\begin{align}
    \sat(u_t;u_{t-1}) :=
    \varphi^{\clip}_{[\underline u,\overline u]}
    \left(
        u_{t-1}+\varphi^\clip_{ [ \underline{\Delta u}, \overline{\Delta u} ] }
        \left( u_t - u_{t-1} \right)
    \right).
    \label{eq:saturation_map}
\end{align}
This mapping initially saturated the one-step change and then saturated the input itself.
When a predicted control sequence was generated, the mapping was applied sequentially with respect to the prediction time because the admissible set at time $t$ depended on the preceding input $u_{t-1}$.
Defining the constraint set as $\calU(u_{t-1}) := \{ u\in\R^{n_u} | \underline{u} \leq u \leq \overline{u},\ \underline{\Delta u} \leq u - u_{t-1} \leq \overline{\Delta u} \}$, the saturation mapping coincided with the projection $\Pi_{ \calU(u_{t-1}) }(u_t)$, as shown in the Appendix.

\subsubsection{MPPI}
In MPC, denoting the current state at each control step as $x_0$, we solved the optimal control problem with prediction horizon $N$ as follows:
\begin{align}
    \min_{u}\ J := \sum_{t=0}^{N-1} l_t
    \ \st\
    x_{t+1}=f(x_t,u_t),\
    u_t \in \calU(u_{t-1})\ (t=0,\ldots,N-1)
    \label{eq:mppi_ocp}
\end{align}
and applied the obtained $u_0$ to the plant.
Here, $l_t$ is the stage cost at time $t$.
When the state-transition function $f$ or objective function $J$ in the preceding problem was non-differentiable, conventional gradient-based optimization methods were not directly applicable.
Therefore, MPPI performed approximate optimization by sampling the control-input sequence.
Let $\lambda>0$ be the temperature parameter. For samples with objective values $J^{(i)}$, the MPPI weight was computed as follows:
\begin{align}
    w^{(i)}=\exp\left(-\frac{J^{(i)}-\min_i J^{(i)}}{\lambda}\right)\ (i=1,\ldots,I).
    \label{eq:mppi_weights}
\end{align}
The conventional MPPI algorithm used in this study is summarized in Algorithm~\ref{alg:mppi}.

\begin{algorithm}[t]
\caption{Conventional MPPI for control-input sequence optimization}
\label{alg:mppi}
\begin{algorithmic}[1]
\Require Current state $x_0$, previous input $u_{-1}$, nominal input sequence $u_0,\ldots,u_{N-1}$
\For{$k=1,\ldots,N^\mathrm{iter}$}
    \For{$i=1,\ldots,I$}
        \State Sample $\varepsilon_t^{(i)}\sim\calN(0,\mathrm{diag}(\sigma^u)^2)$ for $t=0,\ldots,N-1$.
        \State Set $u_t^{(i)}\gets u_t+\varepsilon_t^{(i)}$ for $t=0,\ldots,N-1$.
        \State Set $x_0^{(i)}\gets x_0$ and $u_{-1}^{(i)}\gets u_{-1}$.
        \For{$t=0,\ldots,N-1$}
            \State $u_t^{(i)}\gets \sat(u_t^{(i)};u_{t-1}^{(i)})$.
            \State $x_{t+1}^{(i)}\gets f(x_t^{(i)},u_t^{(i)})$ and accumulate $J^{(i)}$.
        \EndFor
    \EndFor
    \State Compute $w^{(i)}$ using equation~\eqref{eq:mppi_weights}.
    \For{$t=0,\ldots,N-1$}
        \State $u_t\gets u_t+\dfrac{\sum_{i=1}^{I}w^{(i)}\varepsilon_t^{(i)}}{\sum_{i=1}^{I}w^{(i)}}$.
        \State $u_t\gets \sat(u_t;u_{t-1})$.
    \EndFor
\EndFor
\State Apply $u_0$ to the plant.
\end{algorithmic}
\end{algorithm}

The rollout computations for different samples are independent until the weighted update and can therefore be parallelized on a graphics processing unit.

\subsubsection{Proposed method: MPPI-based optimization of PID gains}
The proposed method optimized the PID-gain vector as follows instead of directly optimizing the control inputs in MPPI:
\begin{align}
    \theta := [& K_{1,1}^\rmP, K_{1,1}^\rmI, K_{1,1}^\rmD, \ldots, K_{n_u,n_{n_u}^e}^\rmP, K_{n_u,n_{n_u}^e}^\rmI, K_{n_u,n_{n_u}^e}^\rmD]^T
    .
\end{align}
Compared with conventional MPPI, the optimization variables were PID gains that remained fixed over the horizon; therefore, the dimension of the decision variables could be significantly reduced.
The resulting algorithm is shown in Algorithm~\ref{alg:mppi_pid}.

\begin{algorithm}[t]
\caption{MPPI--PID for PID-gain optimization}
\label{alg:mppi_pid}
\begin{algorithmic}[1]
\Require Current state $x_0$, previous input $u_{-1}$, current PID-gain vector $\theta$
\For{$k=1,\ldots,N^\mathrm{iter}$}
    \For{$i=1,\ldots,I$}
        \State \textbf{Sample $\varepsilon^{(i)}\sim\calN(0,\mathrm{diag}(\sigma^\theta)^2)$ and set $\theta^{(i)}\gets\theta+\varepsilon^{(i)}$.}
        \State Set $x_0^{(i)}\gets x_0$ and $u_{-1}^{(i)}\gets u_{-1}$.
        \For{$t=0,\ldots,N-1$}
            \State \textbf{Compute $u_t^{(i)}$ from the PID law \eqref{eq:pid} using $\theta^{(i)}$.}
            \State $u_t^{(i)}\gets \sat(u_t^{(i)};u_{t-1}^{(i)})$.
            \State $x_{t+1}^{(i)}\gets f(x_t^{(i)},u_t^{(i)})$ and accumulate $J^{(i)}$.
        \EndFor
    \EndFor
    \State Compute $w^{(i)}$ using equation~\eqref{eq:mppi_weights}.
    \State \textbf{$\theta\gets\theta+\dfrac{\sum_{i=1}^{I}w^{(i)}\varepsilon^{(i)}}{\sum_{i=1}^{I}w^{(i)}}$.}
\EndFor
\State \textbf{Compute $u_0$ from the PID law \eqref{eq:pid} using the updated $\theta$.}
\State Apply $u_0\gets\sat(u_0;u_{-1})$ to the plant.
\end{algorithmic}
\end{algorithm}

The bold statements indicate the steps that differ from conventional MPPI.
As in conventional MPPI, the sample rollouts can be parallelized because each sampled gain vector defines an independent predicted trajectory.

\subsection{Theoretical considerations}
\subsubsection{Information-theoretic interpretation of the algorithm}
In this subsection, we present an information-theoretic interpretation of an update rule that unifies conventional MPPI and the proposed method.
For conventional MPPI, the derivation reduces to the standard information-theoretic path-integral update \cite{mppi_information}. The role of the present formulation was to express both algorithms through a common abstract optimization variable.
This unified expression also facilitated the sampling-efficiency analysis in the next subsection.
Here, the optimization variable was denoted by $z\in\bbR^{n_z}$.
In conventional MPPI and the proposed method, this variable was denoted as $z=[u_0^T, \ldots, u_{N-1}^T]^T$ and $z=\theta$, respectively.
Then, the update rule for $z$ could be derived as in the following proposition.

\begin{proposition}\label{prop:mppi_kl}
Let the probability distribution of Gaussian perturbations for the variable $z$ be $p(\varepsilon):=\calN(\varepsilon; 0, \Sigma^z)$.
Let the objective value computed using the perturbed variable $z+\varepsilon$ be $J(z+\varepsilon)$.
We defined a probability distribution obtained by weighting the perturbation distribution based on this objective value as follows:
\begin{align}
    q(\varepsilon) := \dfrac{1}{Z} p(\varepsilon) \exp\left( -\frac{J(z+\varepsilon)}{\lambda} \right),\
    Z := \int p(\varepsilon) \exp\left( -\frac{J(z+\varepsilon)}{\lambda} \right) d\varepsilon.
\end{align}
Then, we considered the optimization problem of projecting the preceding $q$ onto the set of Gaussian distributions with covariance $\Sigma^z$ in terms of Kullback--Leibler divergence as follows:
\begin{align}
    \min_m D^\mathrm{KL}( q(\varepsilon) || \calN(\varepsilon; m, \Sigma^z) ).
\end{align}
The solution was given by $m^*=\bbE_{q}[\varepsilon]$.
Let the update of the variable $z$ based on this $m^*$ be $z \leftarrow z + m^*$. Here, if a Monte Carlo approximation is applied, it yields the following:
\begin{align}
    z \leftarrow z + \frac{ \sum_{i=1}^I w^{(i)} \varepsilon^{(i)} }{ \sum_{i=1}^I w^{(i)} },\
    w^{(i)} := \exp\left( -\frac{ J(z+\varepsilon^{(i)}) }{ \lambda } \right),\ \varepsilon^{(i)} \sim p(\varepsilon).
    \label{eq:mc_approx}
\end{align}
\end{proposition}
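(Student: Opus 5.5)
The plan is to expand the KL divergence, isolate the terms that depend on $m$, and minimize the resulting quadratic. Writing
\begin{align}
D^\mathrm{KL}( q \| \calN(\cdot; m, \Sigma^z) ) = \int q(\varepsilon)\log q(\varepsilon)\,d\varepsilon - \int q(\varepsilon)\log \calN(\varepsilon; m, \Sigma^z)\,d\varepsilon ,
\end{align}
we see that the first term does not depend on $m$. The problem is therefore equivalent to maximizing the cross term. Since $\Sigma^z$ is fixed, the normalizing constant of the Gaussian drops out, and
\begin{align}
-\log \calN(\varepsilon; m, \Sigma^z) = \tfrac12 (\varepsilon-m)^T (\Sigma^z)^{-1} (\varepsilon-m) + \mathrm{const}.
\end{align}
So we need to minimize $F(m) := \tfrac12 \bbE_q[(\varepsilon-m)^T(\Sigma^z)^{-1}(\varepsilon-m)]$.

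To do this, expand $F(m) = \tfrac12 m^T(\Sigma^z)^{-1}m - m^T(\Sigma^z)^{-1}\bbE_q[\varepsilon] + \mathrm{const}$. This is a strictly convex quadratic because $\Sigma^z$ is positive definite. Setting the gradient $(\Sigma^z)^{-1}(m - \bbE_q[\varepsilon])$ to zero gives the unique minimizer $m^* = \bbE_q[\varepsilon]$.

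For the Monte Carlo step, write
\begin{align}
m^* = \frac{\int \varepsilon\, p(\varepsilon) e^{-J(z+\varepsilon)/\lambda} d\varepsilon}{\int p(\varepsilon) e^{-J(z+\varepsilon)/\lambda} d\varepsilon}.
\end{align}
Both the numerator and the denominator are expectations under $p$. Replacing each with a sample average over $\varepsilon^{(i)} \sim p$ makes the factors $1/I$ cancel, which gives \eqref{eq:mc_approx}. This is a self-normalized importance-sampling estimator, and it is consistent by the strong law of large numbers applied to the numerator and denominator separately. We also note that multiplying every $w^{(i)}$ by the common factor $\exp(\min_i J^{(i)}/\lambda)$ leaves the ratio unchanged. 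This reconciles \eqref{eq:mc_approx} with the weights \eqref{eq:mppi_weights}.

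The main obstacle is technical well-posedness rather than algebra. We must make sure that $Z$ is finite and positive, that $\bbE_q[\varepsilon]$ and $\bbE_q[\|\varepsilon\|^2]$ exist, and that the entropy of $q$ is finite, so the decomposition of the KL divergence is legitimate. I would handle this by assuming $J$ is measurable and bounded below, for instance $J \ge 0$ as a sum of nonnegative stage costs. Then $e^{-J/\lambda} \le 1$, so $q \le p/Z$ and $q$ inherits Gaussian moments. Positivity of $Z$ holds as long as $J < \infty$ on a set of positive $p$-measure. Under these assumptions every step above is justified.
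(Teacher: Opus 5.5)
Your proposal is correct and follows essentially the same route as the paper. Both drop the $m$-independent entropy term, reduce the KL divergence to a quadratic in $m$, and rewrite $\bbE_q[\varepsilon]$ as a ratio of $p$-expectations before replacing each by a sample average; the only difference is that the paper completes the square where you set the gradient to zero. Your well-posedness assumptions (e.g.\ $J \ge 0$) and the remark reconciling the unshifted weights with \eqref{eq:mppi_weights} are sound additions that the paper leaves implicit.
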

\begin{proof}
First, the definition of Kullback--Leibler divergence provides the following:
\begin{align}
    D^\mathrm{KL}( q(\varepsilon) || \calN(\varepsilon; m, \Sigma^z) )
    &= \int q(\varepsilon) \log \frac{ q(\varepsilon) }{ \calN(\varepsilon; m, \Sigma^z) } d\varepsilon \nonumber \\
    &= \int q(\varepsilon) \log q(\varepsilon) d\varepsilon - \int q(\varepsilon) \log \calN(\varepsilon; m, \Sigma^z) d\varepsilon
    \nonumber \\
    &= \bbE_q[ \log q(\varepsilon) ] - \bbE_q[ \log \calN(\varepsilon; m, \Sigma^z) ] \nonumber \\
    &= \mathrm{const} - \bbE_q\left[ -\dfrac{1}{2} (\varepsilon - m)^T (\Sigma^z)^{-1} (\varepsilon - m) \right]
    \nonumber \\
    &= \mathrm{const} + \dfrac{1}{2}( m - \bbE_q[\varepsilon] )^T (\Sigma^z)^{-1} ( m - \bbE_q[\varepsilon] ).
\end{align}
Therefore, the $m$ that minimizes this expression is given by $m^*=\bbE_q[\varepsilon]$.
Furthermore, the following equation is obtained:
\begin{align}
    \bbE_q[\varepsilon]
    &= \int \varepsilon q(\varepsilon) d\varepsilon
    = \dfrac{1}{Z} \int \varepsilon p(\varepsilon) \exp\left( -\frac{J(z+\varepsilon)}{\lambda} \right) d\varepsilon
    \nonumber \\
    &= \dfrac{ \bbE_{p}[ \varepsilon \exp( -J(z+\varepsilon)/\lambda ) ] }{ \bbE_{p}[ \exp( -J(z+\varepsilon)/\lambda ) ] }.
\end{align}
Thus, the Monte Carlo approximation yields equation~\eqref{eq:mc_approx}.
\end{proof}

Next, we demonstrated that when the objective function is differentiable, the update rule can be interpreted as a gradient descent.

\begin{proposition}
Let us assume that the objective function $J$ is differentiable with respect to $z$ and that the perturbation $\varepsilon$ is sufficiently small.
Then, the distribution $q(\varepsilon)$ in Proposition \ref{prop:mppi_kl} can be approximated by the following:
\begin{align}
    q(\varepsilon) \simeq \calN\left( \varepsilon; - \frac{ \Sigma^z \nabla_z J(z) }{ \lambda }, \Sigma^z \right).
\end{align}
Moreover, the update of the variable $z$ based on this approximation becomes the following:
\begin{align}
    z \leftarrow z - \frac{ \Sigma^z \nabla_z J(z) }{ \lambda }.
\end{align}
\end{proposition}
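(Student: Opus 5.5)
The plan is a first-order Taylor expansion of the objective, followed by completing the square in the Gaussian exponent. Since $J$ is differentiable and $\varepsilon$ is assumed small, I will write
\begin{align}
    J(z+\varepsilon) \simeq J(z) + \nabla_z J(z)^T \varepsilon .
\end{align}
Substituting this into the definition of $q$ in Proposition~\ref{prop:mppi_kl}, the constant factor $\exp(-J(z)/\lambda)$ does not depend on $\varepsilon$, so it cancels against the same factor in $Z$. This leaves
\begin{align}
    q(\varepsilon) \propto \exp\left( -\frac{1}{2}\varepsilon^T (\Sigma^z)^{-1}\varepsilon - \frac{1}{\lambda}\nabla_z J(z)^T\varepsilon \right).
\end{align}

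Next I complete the square. Set $g := \nabla_z J(z)$ and $\mu := -\Sigma^z g/\lambda$. The exponent then equals
\begin{align}
    -\frac{1}{2}(\varepsilon-\mu)^T(\Sigma^z)^{-1}(\varepsilon-\mu) + \frac{1}{2}\mu^T(\Sigma^z)^{-1}\mu .
\end{align}
To check this, expand the cross term: $\mu^T(\Sigma^z)^{-1}\varepsilon = -g^T\varepsilon/\lambda$, using the symmetry of $\Sigma^z$.

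The last term is independent of $\varepsilon$ and is absorbed into the normalization. Because $q$ integrates to one, it must then equal exactly $\calN(\varepsilon;\mu,\Sigma^z)$. This is the claimed approximation.

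For the update, I invoke Proposition~\ref{prop:mppi_kl}. The optimal shift is $m^* = \bbE_q[\varepsilon]$, and the mean of the Gaussian just obtained is $\mu$. Hence $z \leftarrow z + m^*$ becomes
\begin{align}
    z \leftarrow z - \frac{\Sigma^z\nabla_z J(z)}{\lambda},
\end{align}
which is a preconditioned gradient step with step size $1/\lambda$.

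The main obstacle is justifying the linearization. The integral defining $Z$ and $\bbE_q[\varepsilon]$ runs over all $\varepsilon$, but the expansion is only valid for small $\varepsilon$. I would argue heuristically: when $\Sigma^z$ is small, $p$ concentrates in a neighborhood where the remainder $o(\|\varepsilon\|)$ is negligible, so the approximation holds to leading order. The statement uses ``$\simeq$'', so this level of rigor is appropriate.
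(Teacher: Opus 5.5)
Your proposal is correct and follows the paper's proof essentially step for step. Like the paper, you linearize $J$ about $z$, absorb the $\varepsilon$-independent factor $\exp(-J(z)/\lambda)$ into the normalization, complete the square to identify $\calN(\varepsilon; -\Sigma^z\nabla_z J(z)/\lambda, \Sigma^z)$, and then read off the update from $m^*=\bbE_q[\varepsilon]$ in Proposition~\ref{prop:mppi_kl}; your explicit cross-term check and your caveat on the linearization are details the paper leaves implicit.
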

\begin{proof}
Applying a Taylor expansion to $J$ yields the following: 
\begin{align}
    J(z+\varepsilon) \simeq J(z) + \nabla_z J(z)^T \varepsilon.
\end{align}
Thus, the distribution $q(\varepsilon)$ can be written as the following:
\begin{align}
    q(\varepsilon)
    &\simeq \dfrac{1}{Z} p(\varepsilon) \exp\left( -\frac{ J(z) + \nabla_z J(z)^T \varepsilon }{ \lambda } \right)
    \nonumber \\
    &\propto \exp\left( -\dfrac{1}{2} \varepsilon^T (\Sigma^z)^{-1} \varepsilon - \dfrac{ \nabla_z J(z)^T \varepsilon }{ \lambda } \right) \nonumber \\
    &\propto \exp\left(
        -\dfrac{1}{2} \left( \varepsilon + \dfrac{ \Sigma^z \nabla_z J(z) }{ \lambda } \right)^T (\Sigma^z)^{-1}
        \left( \varepsilon + \dfrac{ \Sigma^z \nabla_z J(z) }{ \lambda } \right)
        \right).
\end{align}
Therefore, $q(\varepsilon)$ can be approximated by $\calN( \varepsilon; - \Sigma^z \nabla_z J(z) / \lambda, \Sigma^z )$.
The latter result follows directly from the definition.
\end{proof}
\subsubsection{Sample efficiency based on effective sample size}
This subsection describes a simplified analysis that illustrates how the dimension of the optimization variable affects the sample efficiency of the MPPI update rule.
As the update in equation~\eqref{eq:mc_approx} corresponds to importance sampling with weights $w^{(i)}$, a standard measure of weight degeneracy and sample efficiency is the effective sample size (ESS) \cite{ess_importance}.
For normalized weights $\widetilde{w}^{(i)}:=w^{(i)}/\sum_{j=1}^I w^{(j)}$, the sample ESS is defined as the following:
\begin{align}
    \widehat{I}^{\mathrm{ESS}}
    := \frac{1}{\sum_{i=1}^{I} (\widetilde{w}^{(i)})^2}
    = \frac{\left(\sum_{i=1}^{I} w^{(i)}\right)^2}{\sum_{i=1}^{I} (w^{(i)})^2}.
\end{align}
Next, we analyzed an idealized counterpart of the preceding quantity, which is given by the following:
\begin{align}
    I^{\mathrm{ESS}}
    := I \frac{(\bbE_{p}[w])^2}{\bbE_{p}[w^2]}.
    \label{eq:ess_ideal}
\end{align}
The evaluation of $I^{\mathrm{ESS}}$ requires the first and second moments of the weight $w$ under $p$.

Under the same differentiability and small-perturbation assumptions as in the previous proposition, we approximated the objective function around $z$ as the following:
\begin{align}
    J(z+\varepsilon) \simeq J(z) + g^T \varepsilon,\
    g := \nabla_z J(z).
\end{align}
Then, the weight in equation~\eqref{eq:mc_approx} becomes the following:
\begin{align}
    w(\varepsilon)
    = \exp\left(-\frac{J(z+\varepsilon)}{\lambda}\right)
    \simeq \exp\left( -\frac{J(z)}{\lambda} \right)
    \exp\left(-\frac{g^T \varepsilon}{\lambda}\right)
    =:c \exp\left(-\frac{g^T \varepsilon}{\lambda}\right).
\end{align}
As $\varepsilon\sim\calN(0,\Sigma^z)$, we have the following:
\begin{align}
    -\frac{g^T \varepsilon}{\lambda}
    \sim \calN\left(0,\ \frac{g^T \Sigma^z g}{\lambda^2}\right),
\end{align}
and hence, $w(\varepsilon)$ follows a log-normal distribution.
Therefore, its moments are given by the following:
\begin{align}
    \bbE_{p}[w]
    = c \exp\left(\frac{1}{2\lambda^2} g^T \Sigma^z g\right),\
    \bbE_{p}[w^2]
    = c^2 \exp\left(\frac{2}{\lambda^2} g^T \Sigma^z g\right).
\end{align}
Substituting these into equation~\eqref{eq:ess_ideal} yields the following:
\begin{align}
    I^{\mathrm{ESS}}
    \simeq I \exp\left(-\frac{1}{\lambda^2} g^T \Sigma^z g\right).
    \label{eq:ess_result_general}
\end{align}

To make the scaling with the dimension explicit, we assumed the following: the perturbation covariance is isotropic, $\Sigma^z = (\sigma^z)^2 I_{n_z}$, and each component contributes similarly to the squared gradient norm.
Under these assumptions, $\|g\|^2 \simeq n_z \overline{g}^2$, and hence $g^T\Sigma^z g \simeq (\sigma^z)^2 n_z \overline{g}^2$.
Equation~\eqref{eq:ess_result_general} becomes the following:
\begin{align}
    I^{\mathrm{ESS}}
    \simeq I \exp\left(
        -\frac{ (\sigma^z)^2 \overline{g}^2 }{\lambda^2} n_z
    \right).
    \label{eq:ess_result_dim}
\end{align}
Equation~\eqref{eq:ess_result_dim} indicates that the effective sample size can decrease exponentially as the optimization dimension $n_z$ increases in this idealized setting.
Therefore, to maintain $I^{\mathrm{ESS}}$ at a desired level, the required number of samples $I$ can grow exponentially with $n_z$.
This argument is not intended to cover cases in which the effective gradient is sparse or otherwise has a special structure that breaks this dimensional scaling.

In conventional MPPI, the optimization variable is the control-input sequence $z=[u_0^T,\ldots,u_{N-1}^T]^T$ and hence, $n_z=n_u N$.
In contrast to conventional MPPI, in the proposed method, $z=\theta$ and $n_z$ equaled the number of PID gains, which was typically small and independent of the horizon length.
Consequently, under the same sampling budget $I$, the proposed method could achieve a larger effective sample size, providing theoretical support for its improved sample efficiency.

\subsubsection{Smoothness of the control input}
In this subsection, we analyze the temporal smoothness of the control input expected from the proposed method.
To provide intuition, we present a simplified analysis of the essential mechanism under a fixed perturbation distribution.

First, the control input based on a PID control can be expressed as the following:
\begin{align}
    u_t(\theta) = u^\mathrm{bias} + E_t \theta
\end{align}
where $E_t$ is a matrix formed by stacking the errors and their derivatives and integrals.
In the proposed method, the PID gains were perturbed from a nominal value as $\theta=\theta^\mathrm{nom}+\varepsilon\ ( \varepsilon\sim\calN(0,\Sigma^\theta) )$; therefore, the perturbation of the control input is the following:
\begin{align}
    \delta u_t := u_t(\theta)-u_t(\theta^\mathrm{nom}) = E_t \varepsilon.
\end{align}
Hence, the covariance between perturbations at different times conditioned on $E$ is the following:
\begin{align}
    \mathrm{Cov}[\delta u_t, \delta u_s | E]
    =\bbE[ E_t \varepsilon\varepsilon^T E_s^T | E]
    = E_t \bbE[\varepsilon\varepsilon^T] E_s^T
    = E_t \Sigma^\theta E_s^T.
\end{align}
This indicates that the control-input perturbations at different times ($t\neq s$) are temporally correlated.
Next, the magnitude of the temporal change in the control-input perturbation is the following:
\begin{align}
    &\bbE[ \| \delta u_t - \delta u_{t-1} \|^2 | E]
    = \bbE[ \| (E_t - E_{t-1})\varepsilon \|^2 | E] \nonumber \\
    &= \bbE[ \varepsilon^T (E_t - E_{t-1})^T (E_t - E_{t-1}) \varepsilon | E]
    =\mathrm{Tr}[ (E_t - E_{t-1}) \Sigma^\theta (E_t - E_{t-1})^T ].
\end{align}
Here, we used the fact that, in the proposed method, the PID gains remain constant throughout the horizon.
The preceding expression indicates that when the error terms vary slowly such that $E_t\simeq E_{t-1}$, the average temporal variation of the control-input perturbation tends to be small.

By contrast, the control input in conventional MPPI can be expressed as the following:
\begin{align}
    u_t = u_t^\mathrm{nom} + \varepsilon_t\ (\varepsilon_t\sim\calN(0,\Sigma^u)).
\end{align}
Thus, the perturbation is $\delta u_t = \varepsilon_t$, and because perturbations at different times are independent, the covariance is $\mathrm{Cov}[\delta u_t, \delta u_s] = 0\ (t \neq s)$.
This indicates that the control-input perturbations have no temporal correlation.
Next, the magnitude of the temporal change in the perturbation is the following:
\begin{align}
    \bbE[ \| \delta u_t - \delta u_{t-1} \|^2 ]
    &= \bbE[ \| \varepsilon_t - \varepsilon_{t-1} \|^2 ]
    = \bbE[ \varepsilon_t^T\varepsilon_t - 2\varepsilon_t^T\varepsilon_{t-1} + \varepsilon_{t-1}^T\varepsilon_{t-1} ]
    \nonumber \\
    &= 2\mathrm{Tr}[ \Sigma^u ].
\end{align}
This indicates that the average variation in the control-input perturbation has a constant magnitude irrespective of the state.
From the preceding analysis, the proposed method is expected to yield smaller input increments and smoother control inputs than conventional MPPI.

\begin{rem}
The aforementioned analysis concerns the control-input sequence generated within the prediction horizon and does not guarantee smooth closed-loop inputs across all control steps.
Nevertheless, smoother predicted input sequences are expected to promote smoother applied inputs in a receding-horizon implementation because the first element of each optimized sequence is applied to the plant.
\end{rem}

\section{Results}
This section comprises the mini-forklift application setup, system identification based on real-machine data, and path-following comparisons among the fixed-gain PID control, conventional MPPI, and proposed MPPI--PID method.

\subsection{Application to forklift path-following control}
\subsubsection{State and control input}
We consider the motion of a forklift on a two-dimensional plane.
Let the state $x_t$ and control input $u_t$ at time $t$ be $x_t = [X_t,\ Y_t,\ s_t,\ c_t,\ v_t^X,\ v_t^Y,\ r_t]^T$, $u_t = [a_t,\ \delta_t]^T$.
Here, $(X,Y)$ denotes the position; $s=\sin(\psi)$ and $c=\cos(\psi)$ are the trigonometric representations of the yaw angle $\psi$; $(v^X,v^Y)=(\dot{X},\dot{Y})$ is the velocity; and $r=\dot{\psi}$ is the yaw rate.
The control input $a$ is the accelerator command, and $\delta$ is the steering angle.

\subsubsection{Physical model}
We used the following physical model for the forklift:
\begin{align}
    f^\mathrm{phys,c} =
    \begin{bmatrix}
        v_t^X \\
        v_t^Y \\
        c_t r_t \\
        -s_t r_t \\
        k^a a_t c_t - k^V v_t^X - r_t v_t^Y \\
        k^a a_t s_t - k^V v_t^Y + r_t v_t^X \\
        k^\delta \sqrt{(v_t^X)^2+(v_t^Y)^2}\ \delta_t - k^r r_t
    \end{bmatrix}
    \label{eq:fork_phys}
\end{align}
Here, $k^a,k^V,k^\delta,k^r$ denote the identification parameters.
Particularly, the dynamics for the velocity $v:=[v^X, v^Y]^T$ in the preceding equation are obtained under the assumption that the speed magnitude $V=\sqrt{(v^X)^2+(v^Y)^2}$ adheres to the following:
\begin{align}
    \dot V(t) = k^a a(t) - k^V V(t).
\end{align}
That is, because $v=V[c, s]^T$, we have the following:
\begin{align}
    \dot v &= \dot V
    \begin{bmatrix}
        c \\
        s
    \end{bmatrix}
    + V
    \begin{bmatrix}
        -s r \\
        c r
    \end{bmatrix}
    = (k^a a - k^V V)
    \begin{bmatrix}
        c \\
        s
    \end{bmatrix}
    + rV
    \begin{bmatrix}
        -s \\
        c
    \end{bmatrix}
    \nonumber \\
    &= k^a a
    \begin{bmatrix}
        c \\
        s
    \end{bmatrix}
    - k^V
    \begin{bmatrix}
        v^X \\
        v^Y
    \end{bmatrix}
    + r
    \begin{bmatrix}
        -v^Y \\
        v^X
    \end{bmatrix}.
\end{align}
This coincides with the velocity dynamics in equation~\eqref{eq:fork_phys}.

\subsubsection{Residual-learning model}
From the derivation in the previous subsection, the physical model analytically describes the position and angle dynamics within the state dimensions, whereas modeling errors are expected in the velocity and angular-velocity dynamics.
Therefore, we set the mask in the residual-learning model to $M^\res=[0,0,0,0,1,1,1]^T$.
Furthermore, the available training data were sparse in the low-speed region, placing the neural network in an extrapolative region.
Thus, we adjusted the contribution of the residual using the following state-dependent weight $w^\res$:
\begin{align}
    w^\res(x):=\frac{(v^X)^2+(v^Y)^2}{(v^X)^2+(v^Y)^2+(V^\mathrm{th})^2}.
    \label{eq:w_res_forklift}
\end{align}
Here, the threshold $V^\mathrm{th}$ is treated as a hyperparameter.
From the preceding equation, when the speed magnitude $V=\sqrt{(v^X)^2+(v^Y)^2}$ is sufficiently small, $w^\res(x)\simeq 0$ and the residual is suppressed.

\begin{rem}
When the state representation includes $s=\sin(\psi)$ and $c=\cos(\psi)$ as per the aforementioned equation, numerical or learning errors may yield prediction results that satisfy $s^2+c^2\neq 1$.
Therefore, during inference, we applied sequential normalization $(s,c) \leftarrow (s,c) / \sqrt{s^2+c^2}$.
\end{rem}
\subsubsection{Errors used in a PID control}
Let the reference path for a path-following control be $p(\xi)\in\R^2$ ($\xi\in\bbR$ is the path parameter).
Let the nearest point on the path be expressed as the following:
\begin{align}
    p_t^* := p(\xi_t^*),\
    \xi_t^* := \mathrm{argmin}_\xi
    \left\|
        \begin{bmatrix}
            X_t \\
            Y_t
        \end{bmatrix} - p(\xi)
    \right\|
\end{align}
Let its tangent- and normal-direction vectors be denoted by $\tau_t^*$ and $n_t^*$, respectively.
Furthermore, let the reference speed be specified as $V_t^\mathrm{ref}$.
Subsequently, we used the following three errors for PID control:
\begin{itemize}
    \item $e_{t,1,1}$: Signed speed error defined as the difference between the vehicle speed $V_t=\sqrt{(v_t^X)^2+(v_t^Y)^2}$ and the reference speed $V_t^\mathrm{ref}$, namely, $V_t^\mathrm{ref} - V_t$.
    \item $e_{t,2,1}$: Signed lateral-position error between the position $[X_t,Y_t]^T$ and the nearest point $p_t^*$, namely, $n_t^{*T}( [X_t,Y_t]^T - p_t^* )$.
    \item $e_{t,2,2}$: Signed angular error between the heading direction $\psi_t$ and the tangent direction $\tau_t^*$ with angle $\psi_t^*$, namely, $\psi_t-\psi_t^*$.
\end{itemize}
In this case, the PID-gain vector is 9-dimensional:
\begin{align}
    \theta=[K_{1,1}^\rmP,K_{1,1}^\rmI,K_{1,1}^\rmD, K_{2,1}^\rmP,K_{2,1}^\rmI,K_{2,1}^\rmD, K_{2,2}^\rmP,K_{2,2}^\rmI,K_{2,2}^\rmD]^T.
\end{align}
\subsubsection{Objective function of MPC}
The stage cost within the objective function for a path-following control is defined as the following:
\begin{align}
    &l_t :=
    w^V\left( \sqrt{ (v_t^X)^2+(v_t^Y)^2 }-V_t^\mathrm{ref} \right)^2
    + w^\mathrm{path}\left\|
        \begin{bmatrix}
            X_t \\
            Y_t
        \end{bmatrix} - p_t^*
    \right\|^2
    \nonumber \\
    &\hspace{10mm}
    + w^\mathrm{align}\left(
        1-
        \begin{bmatrix}
            c_t \\
            s_t
        \end{bmatrix}^T \tau_t^*
    \right)
    + w^{\Delta u} \norm{u_t-u_{t-1}}^2
    + w^\mathrm{goal} l_t^\mathrm{goal}.
\end{align}
Here, each $w$ denotes a weight parameter, while $l_t^\mathrm{goal}$ represents a penalty term that enforces a speed restriction near the goal, defined as the following:
\begin{align}
    l_t^\mathrm{goal} :=
    \begin{cases}
        \left(
            \sqrt{(v_t^X)^2+(v_t^Y)^2}
            - \varepsilon^{\mathrm{goal,vel}}
        \right)^2
        & \text{if }
        \begin{aligned}[t]
            &\left\|
                \begin{bmatrix}
                    X_t \\
                    Y_t
                \end{bmatrix}
                - p^\mathrm{goal}
            \right\|
            \le \varepsilon^{\mathrm{goal,pos}} \\
            &\text{and }
            \sqrt{(v_t^X)^2+(v_t^Y)^2}
            \ge \varepsilon^{\mathrm{goal,vel}}
        \end{aligned}
        \\
        0 & \text{otherwise}.
    \end{cases}
\end{align}
Here, $p^\mathrm{goal}\in\bbR^2$ denotes the goal position, and $\varepsilon^{\mathrm{goal,pos}}$ and $\varepsilon^{\mathrm{goal,vel}}$ represent the thresholds.
This objective function is non-differentiable, and therefore, gradient-based optimization methods cannot be directly applied.

\subsection{Dataset construction}
In this study, we constructed a time-series dataset consisting of the state and control inputs that are $x_t=[X_t,Y_t,s_t,c_t,v_t^X,v_t^Y,r_t]^T$ and $u_t=[a_t,\delta_t]^T$, respectively, which are derived from manual real-machine driving data of the mini forklift system introduced in \cite{fork_rl}.
The original dataset contained $[X_t,Y_t,\psi_t,v_t^X,v_t^Y,r_t,a_t,\delta_t]^T$; however, in this work, we used $s_t=\sin(\psi_t)$ and $c_t=\cos(\psi_t)$ to avoid wraparound jumps of the yaw angle $\psi$.

The data were recorded at a sampling rate of approximately 15 Hz ($h\simeq 1/15\simeq 0.0667$ s), yielding a total of 70{,}926 samples corresponding to approximately 78.8 minutes.
Owing to missing intervals in the data, the log was segmented at points where the time difference exceeded 0.1 s.
Subsequently, for each continuous interval, the data were resampled via linear interpolation to yield an equally spaced time series ($h=0.0667$ s), and then, a median filter and moving-average filter were applied.
The filter window sizes were set for each signal as follows:
\begin{itemize}
  \item $X,Y,\psi$: median window 3, moving-average window 9.
  \item $v^X,v^Y,r$: median window 3, moving-average window 7.
  \item $a,\delta$: median window 3, moving-average window 5.
\end{itemize}
Finally, to align the basic unit used in training and evaluation, each preprocessed continuous interval was uniformly partitioned into 5-s segments, which resulted in approximately 75 steps.
Further, these fixed-length segments were split into training, validation, and test datasets in a 70:15:15 ratio.

\subsection{System identification results}
The identification parameters $k^a,k^V,k^\delta,k^r$ of the physical model were estimated via least squares using the full training dataset.
For both the standard neural network and residual-learning model, the state and control input were normalized by dividing by their per-dimension standard deviation.
Minibatch learning was employed to minimize the one-step-ahead state-prediction loss.
The main settings for learning are summarized in Table~\ref{tab:setting_id}.
\begin{table}[t]
  \centering
  \small
  \tabcaption{Main settings for learning-based system identification}
  \label{tab:setting_id}
  \begin{tabular}{p{0.34\linewidth}p{0.60\linewidth}}
    \hline
    Item & Setting \\
    \hline
    Neural network architecture & 3 layers, 50 units, activation function: ReLU \\
    Training settings & Optimizer: Adam, learning rate $1\times 10^{-4}$, batch size 1024, gradient clipping threshold 1.0, standard deviation of noise added to next state $1\times 10^{-4}$, early stopping patience 10 epochs \\
    Parameter of residual weight $w^\res$ & $V^\mathrm{th}=0.20$ m/s \\
    \hline
  \end{tabular}
\end{table}

Using the physical model, standard neural-network model, and  residual-learning model trained by the preceding method, we evaluated the recursive-state-prediction accuracy on the fixed-length test segments using the coefficient of determination $R^2$.
The results are summarized in Table~\ref{tab:r2}.
\begin{table}[t]
  \centering
  \small
  \tabcaption{Coefficient of determination $R^2$ for recursive-state prediction over 5.0 s}
  \label{tab:r2}
  \resizebox{\linewidth}{!}{%
    \begin{tabular}{lcccccccc}
        \hline
        Model & $X$ & $Y$ & $s$ & $c$ & $v^X$ & $v^Y$ & $r$ & Average\\
        \hline
        Physical model & 0.9842 & 0.9926 & 0.9625 & 0.9455 & 0.7288 & 0.8380 & 0.9102 & 0.9088\\
        Standard neural network & 0.9855 & 0.9774 & 0.9684 & 0.9555 & 0.8394 & 0.9089 & \textbf{0.9465} & 0.9402\\
        Residual-learning model & \textbf{0.9920} & \textbf{0.9965} & \textbf{0.9759} & \textbf{0.9648} & \textbf{0.8585} & \textbf{0.9251} & 0.9396 & \textbf{0.9503}\\
        \hline
    \end{tabular}}
\end{table}
Table~\ref{tab:r2} indicates that the residual-learning model yields the highest average $R^2$.
This model achieves superior $R^2$ values in six of the seven state components, while the yaw rate $r$ is best captured by the standard neural-network model.
Based on the highest overall accuracy, the residual-learning model was adopted in the subsequent path-following control study.

\subsection{Path-following control results}
Path-following control was implemented as an MPC that sequentially computed control-input sequences based on the residual-learning model identified in the previous section.
The reference path was generated from the start and goal states using a cubic Hermite spline.
The number of path points was set to 800.
The parameters used for the control are summarized in Table~\ref{tab:setting_ctrl}.
\begin{table}[t]
  \centering
  \small
  \tabcaption{Parameters used in the path-following control}
  \label{tab:setting_ctrl}
  \begin{tabular}{p{0.34\linewidth}p{0.60\linewidth}}
    \hline
    Item & Setting \\
    \hline
    Sampling period & $h=0.0667$ s ($\simeq$ 15 Hz)\\
    Horizon length & $N=60$ ($\simeq$ 4.0 s)\\
    Number of samples for single-run comparisons & $I=2048$ (baseline), $I=16$ (low-sample case)\\
    Number of iterations & $N^\mathrm{iter}=3$\\
    Temperature parameter & $\lambda=1.0$ \\
    Input bounds & $0\leq a\leq 100$, $-65\leq \delta\leq 65$ deg \\
    Input-rate bounds & $|\dot{a}|\leq 32.0$/s, $|\dot{\delta}|\leq 100$ deg/s \\
    One-step change bounds & $|\Delta a|\leq 32.0h\simeq 2.13$, $|\Delta\delta|\leq 100h\simeq 6.67$ deg \\
    Noise-standard deviation in conventional MPPI & $\sigma^u=[8.0,\ 6.0]^T$ (corresponding to $a,\delta$)\\
    Bias in PID control input & $u^\mathrm{bias}=[40.0,\ 0.0]^T$ (corresponding to $a,\delta$) \\
    Initial PID gains & $[K_{1,1}^\rmP,K_{1,1}^\rmI,K_{1,1}^\rmD,K_{2,1}^\rmP,K_{2,1}^\rmI,K_{2,1}^\rmD,K_{2,2}^\rmP,K_{2,2}^\rmI,K_{2,2}^\rmD]$ \\
    & $=[50.0, 0.20, 10.0, 100.0, 0.10, 5.0, 150.0, 0.10, 3.0]$ \\
    Noise-standard deviation of PID gains & $[10.0,\ 0.10,\ 5.0,\ 30.0,\ 0.05,\ 2.0,\ 20.0,\ 0.05,\ 2.0]$ \\
    Cost weights & $w^V=50$, $w^\mathrm{path}=500$, $w^\mathrm{align}=10$, $w^{\Delta u}=0.05$, $w^\mathrm{goal}=50000$ \\
    Thresholds near the goal & $\varepsilon^{\mathrm{goal,pos}}=0.10$ m, $\varepsilon^{\mathrm{goal,vel}}=0.04$ m/s \\
    Reference speed & $V_t^\mathrm{ref}=0.10$ m/s \\
    \hline
  \end{tabular}
\end{table}
In this section, we compared the fixed-gain PID control, conventional MPPI, and proposed method (MPPI--PID).
The fixed-gain PID control fixed the PID gains at the initial values listed in Table~\ref{tab:setting_ctrl} and computed the control input solely according to the PID law \eqref{eq:pid}, without online optimization.
For both conventional MPPI and the proposed method, we employed the same prediction model, horizon length, cost function, and number of samples $I$ for each setting, ensuring that only the difference in methods was reflected in the results.

\subsubsection{Single-run path-following comparisons}
In this section, we initially explain the baseline results with $I=2048$ and then the results with a substantially reduced number of samples $I=16$, with all other parameters unchanged.
Figure~\ref{fig:traj} depicts the planar trajectories generated under the fixed-gain PID control, conventional MPPI, and  proposed method (MPPI--PID) in the baseline setting ($I=2048$).
\begin{figure}[!htbp]
  \centering
  \includegraphics[width=0.99\linewidth]{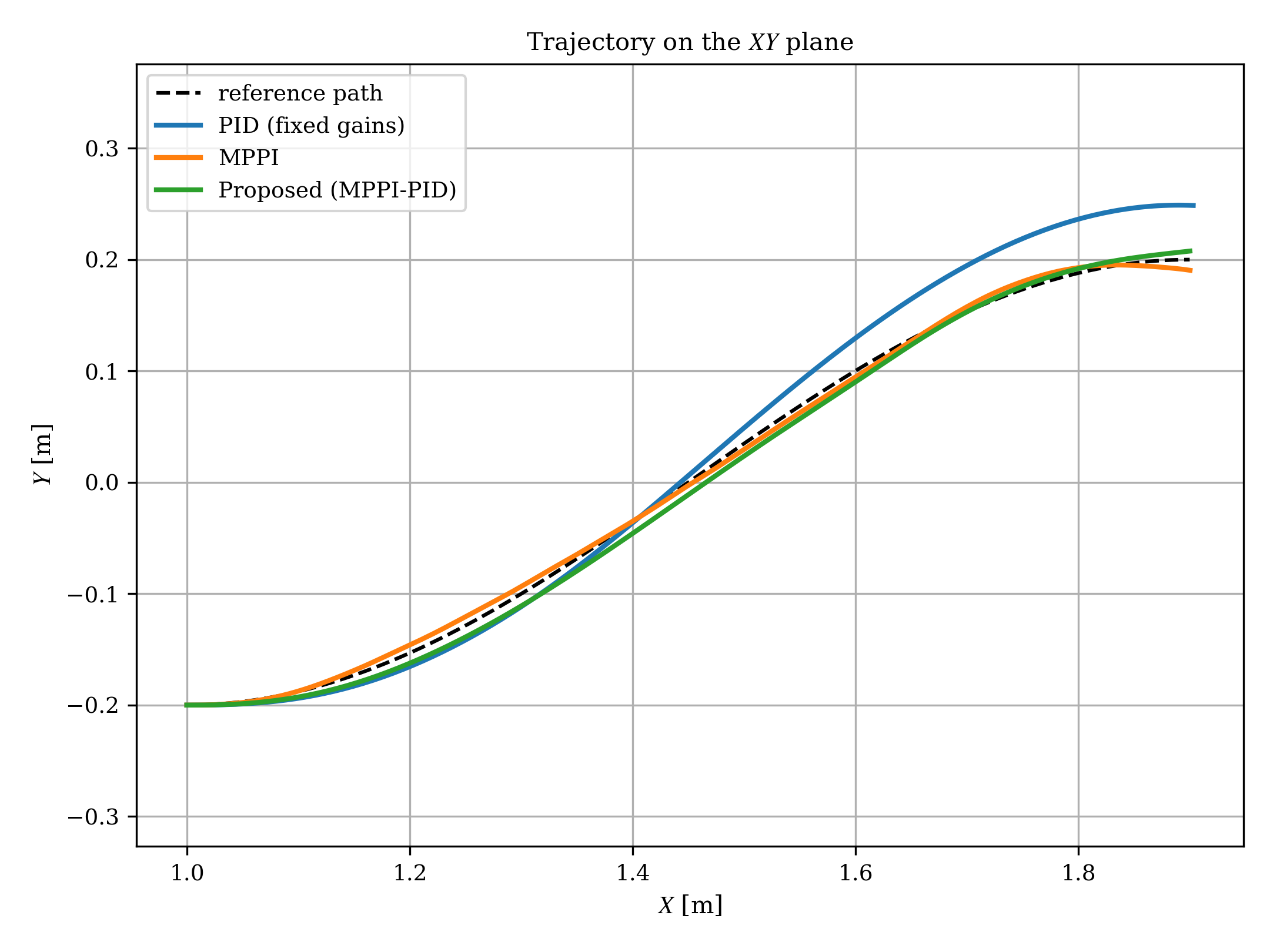}
  \caption{Planar trajectories obtained in the path-following control for $I=2048$}
  \label{fig:traj}
\end{figure}
In this figure, the dashed line represents the reference path, whereas the solid lines denote the tracking trajectories obtained by each method.
Although conventional MPPI and the proposed method exhibit approximately comparable tracking performance relative to the reference path, the fixed-gain PID control demonstrates a larger deviation from the reference path in the latter part.

Next, Figure~\ref{fig:state_input} presents the time-series of the states, control inputs, path-following errors, and one-step control-input increments obtained under the fixed-gain PID control, conventional MPPI, and MPPI--PID in the baseline setting ($I=2048$).
\begin{figure}[!htbp]
  \centering
  \includegraphics[width=0.99\linewidth]{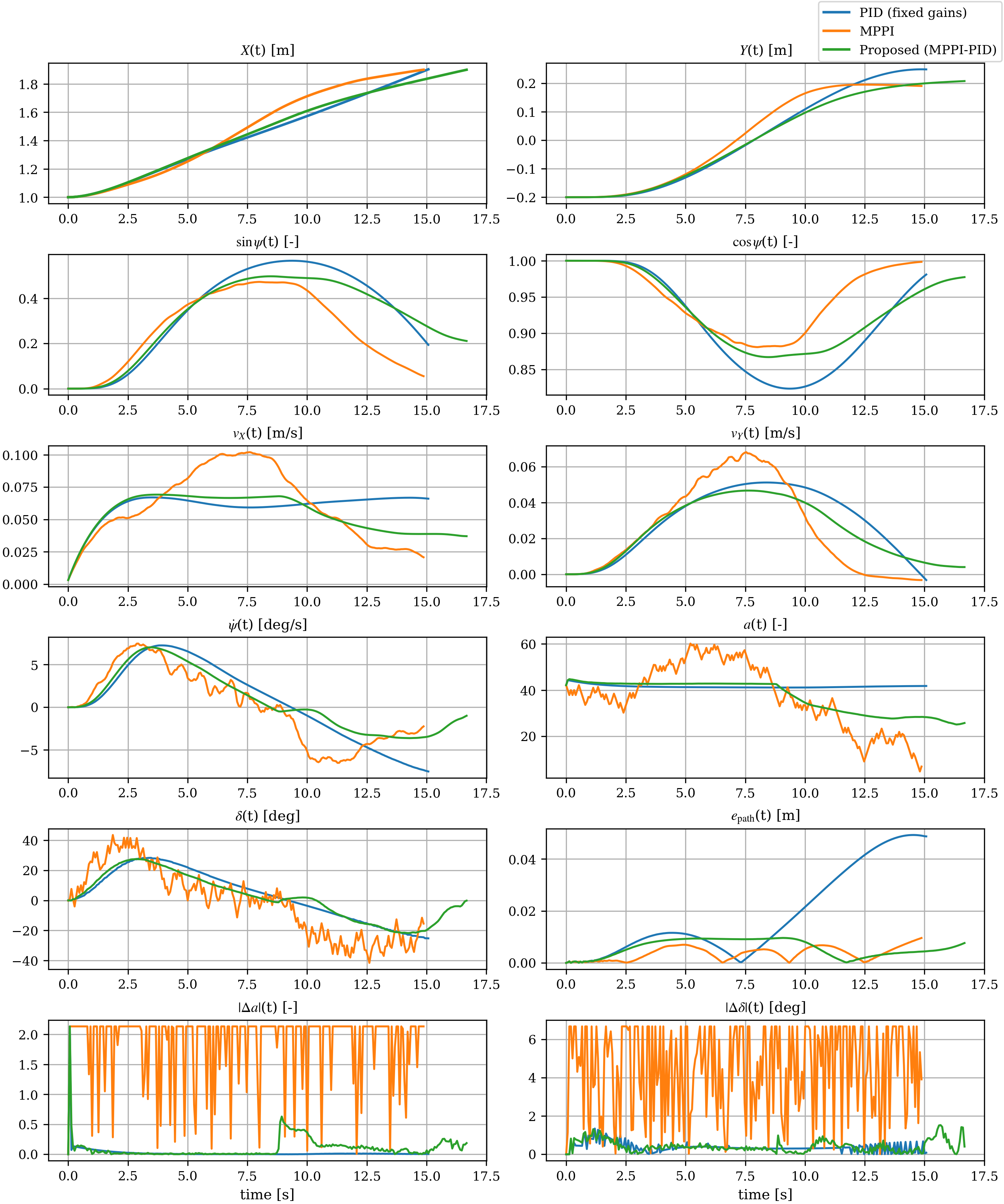}
  \caption{Time-series of the states, control inputs, path-following errors, and control-input increments for $I=2048$}
  \label{fig:state_input}
\end{figure}
In this figure, the path-following errors and one-step control-input increments are defined as $e_t^\mathrm{path}:= \| [X_t,Y_t]^T - p_t^* \|$, $|\Delta a_t|:=|a_t-a_{t-1}|$, and $|\Delta\delta_t|:=|\delta_t-\delta_{t-1}|$, respectively.
In conventional MPPI, $|\Delta a_t|$ and $|\Delta\delta_t|$ frequently approach the upper bounds of the one-step change constraints in Table~\ref{tab:setting_ctrl}, resulting in large fluctuations of the control inputs.
In MPPI--PID, the temporal changes of $a$ and $\delta$ are smaller than those in conventional MPPI, while the path error $e_t^\mathrm{path}$ remains at a level comparable to that of conventional MPPI.
The fixed-gain PID control yields smooth control inputs, but $e_t^\mathrm{path}$ increases in the latter part.

Figure~\ref{fig:traj_16} illustrates the planar trajectories for $I=16$.
\begin{figure}[!htbp]
  \centering
  \includegraphics[width=0.99\linewidth]{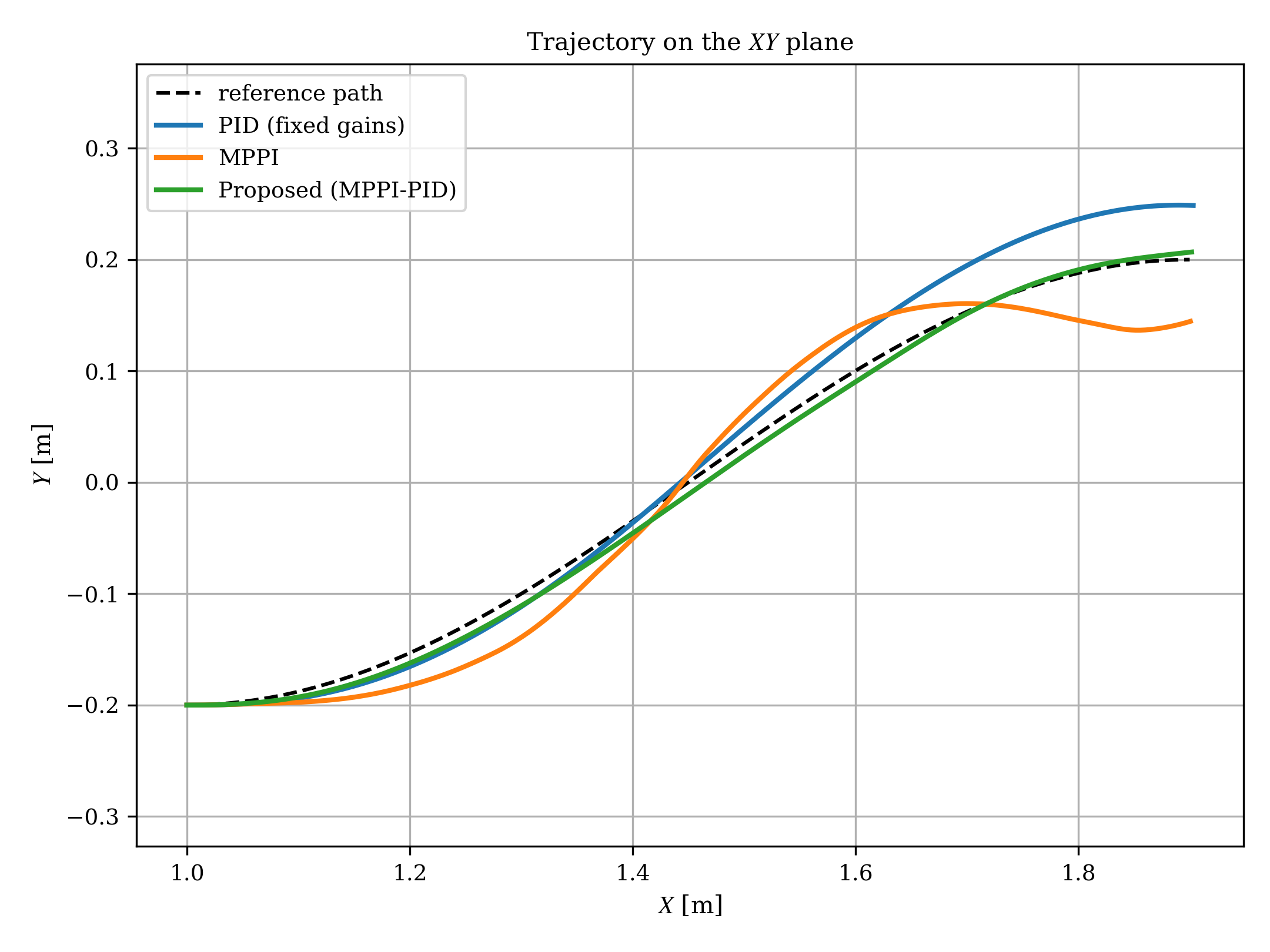}
  \caption{Planar trajectories obtained in the path-following control for $I=16$}
  \label{fig:traj_16}
\end{figure}
Compared with the baseline setting ($I=2048$), conventional MPPI exhibits a larger deviation from the reference path when the sample size is reduced to $I=16$, whereas MPPI--PID preserves a tracking trajectory closely aligned with the reference path.
The fixed-gain PID control still exhibits a larger deviation in the latter part.

Figure~\ref{fig:state_input_16} depicts the corresponding time-series for $I=16$.
\begin{figure}[!htbp]
  \centering
  \includegraphics[width=0.99\linewidth]{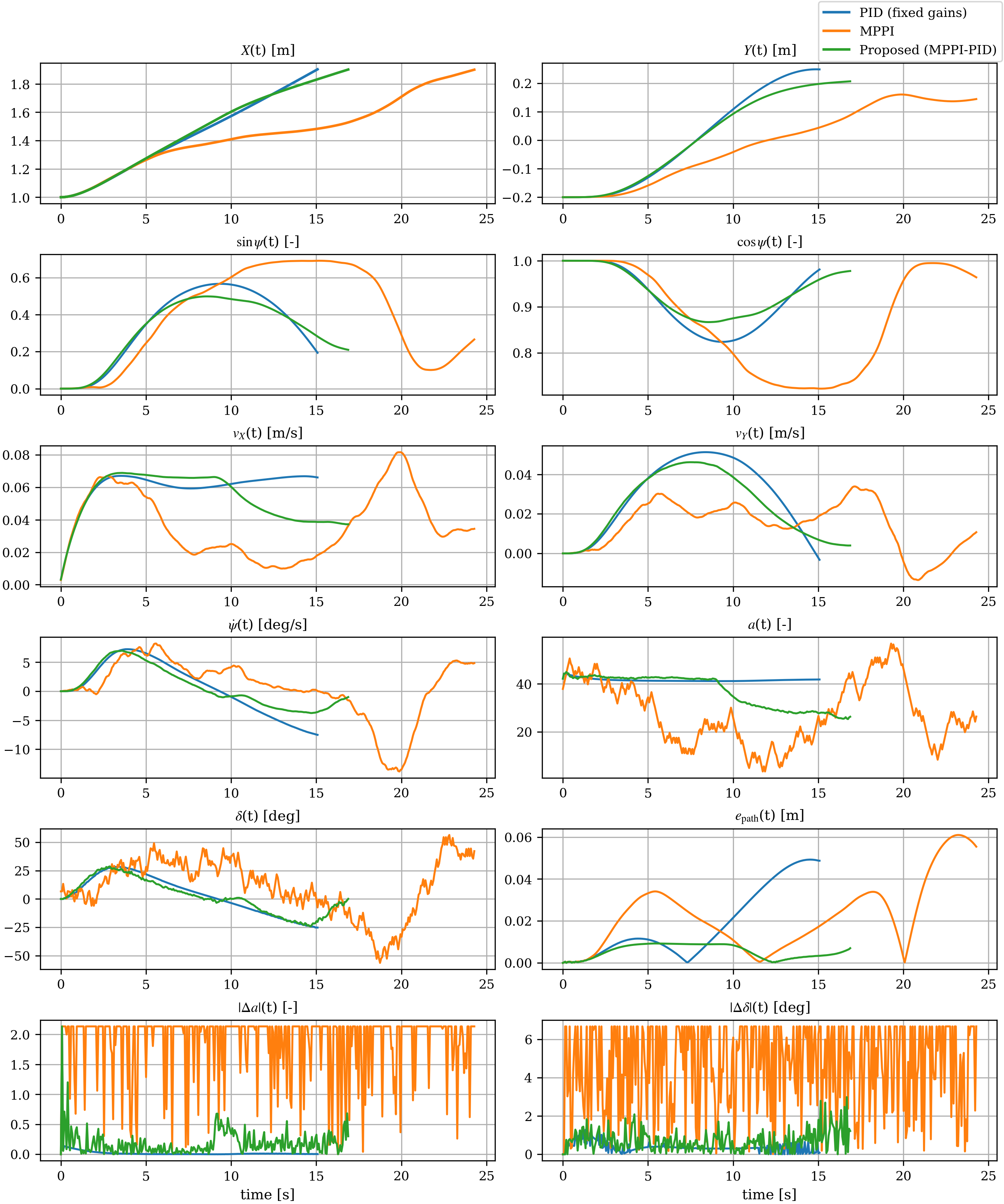}
  \caption{Time-series of the states, control inputs, path-following errors, and control-input increments for $I=16$}
  \label{fig:state_input_16}
\end{figure}
From the perspective of the path error $e_t^\mathrm{path}$, conventional MPPI exhibits larger errors under the low-sample setting, while MPPI--PID maintains smaller errors.
Moreover, the control-input increments $|\Delta a_t|$ and $|\Delta\delta_t|$ in conventional MPPI frequently approach the one-step change bounds, whereas MPPI--PID keeps the increments smaller.

\subsubsection{Evaluation of the sample efficiency, perturbation sensitivity, and computation time}
Next, we evaluated conventional MPPI and MPPI--PID across multiple sample counts, perturbation standard deviations, and random seeds.
The settings are summarized in Table~\ref{tab:auto_experiment_setting}.
The perturbation standard deviations are obtained by multiplying the nominal values listed in Table~\ref{tab:setting_ctrl} by 0.5, 1, or 2.

\begin{table}[t]
  \centering
  \small
  \tabcaption{Evaluation settings for the sample efficiency, perturbation sensitivity, and computation time}
  \label{tab:auto_experiment_setting}
  \begin{tabular}{p{0.40\linewidth}p{0.54\linewidth}}
    \hline
    Item & Setting \\
    \hline
    Methods & Conventional MPPI, MPPI--PID \\
    Sample counts & $I=16$, 256, 2048 \\
    Multipliers for nominal perturbation standard deviations & 0.5, 1, 2 \\
    Repeated random trials & 5 random seeds \\
    Evaluated metrics & Mean path error, mean control-input increment $\|\Delta u_t\|_2$, mean computation time per control step \\
    \hline
  \end{tabular}
\end{table}

Figure~\ref{fig:auto_path_error} shows the mean path error under each condition.
\begin{figure}[!htbp]
  \centering
  \includegraphics[width=0.99\linewidth]{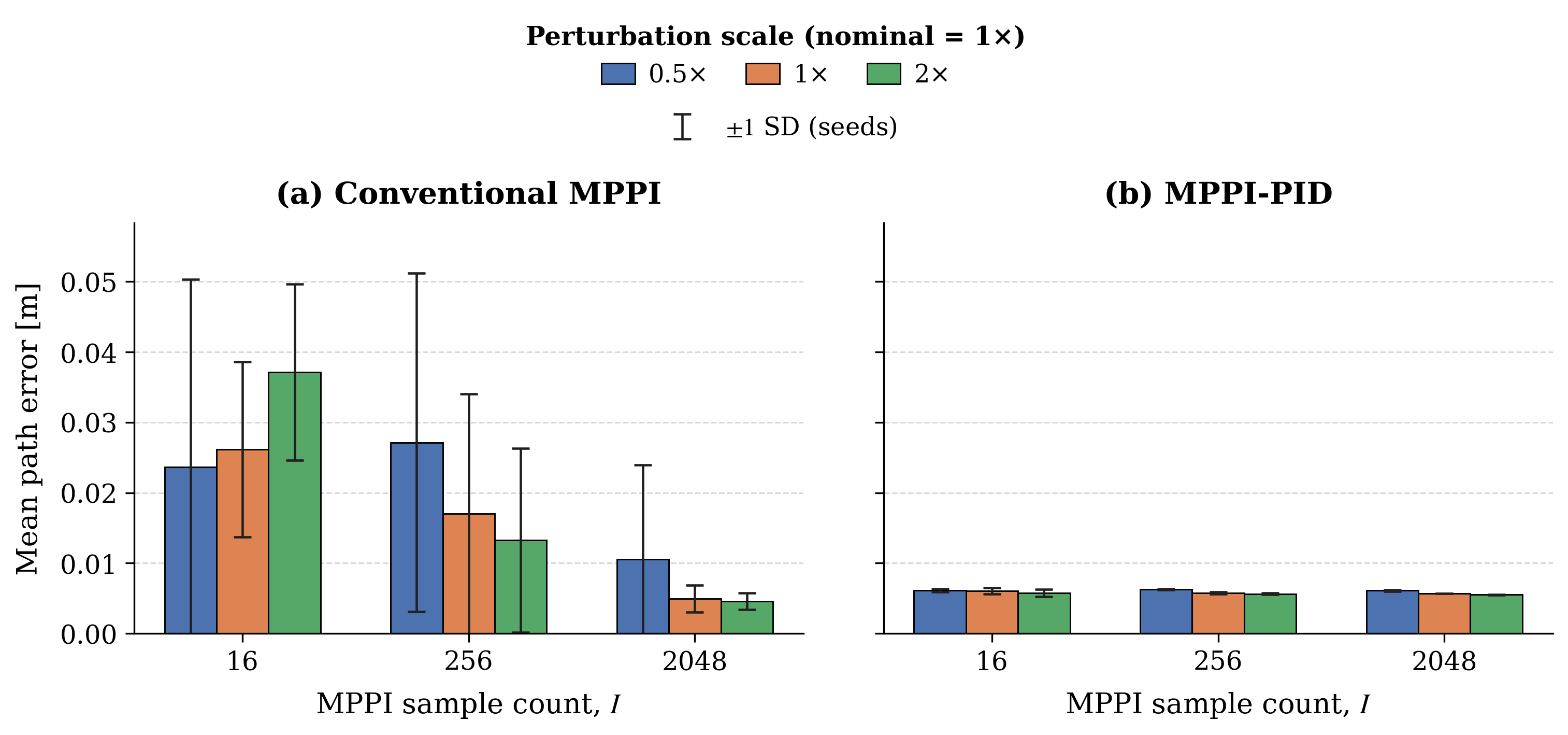}
  \caption{Mean path error for different sample counts and perturbation standard deviations}
  \label{fig:auto_path_error}
\end{figure}
Across all tested sample counts and perturbation standard deviations, MPPI--PID maintains both a small mean path error and low variation across random trials.
By contrast, conventional MPPI exhibits larger path errors and higher trial-to-trial variation, particularly when the sample count is small.

Figure~\ref{fig:auto_control_increment} shows the mean control-input increment $\|\Delta u_t\|_2$.
\begin{figure}[!htbp]
  \centering
  \includegraphics[width=0.99\linewidth]{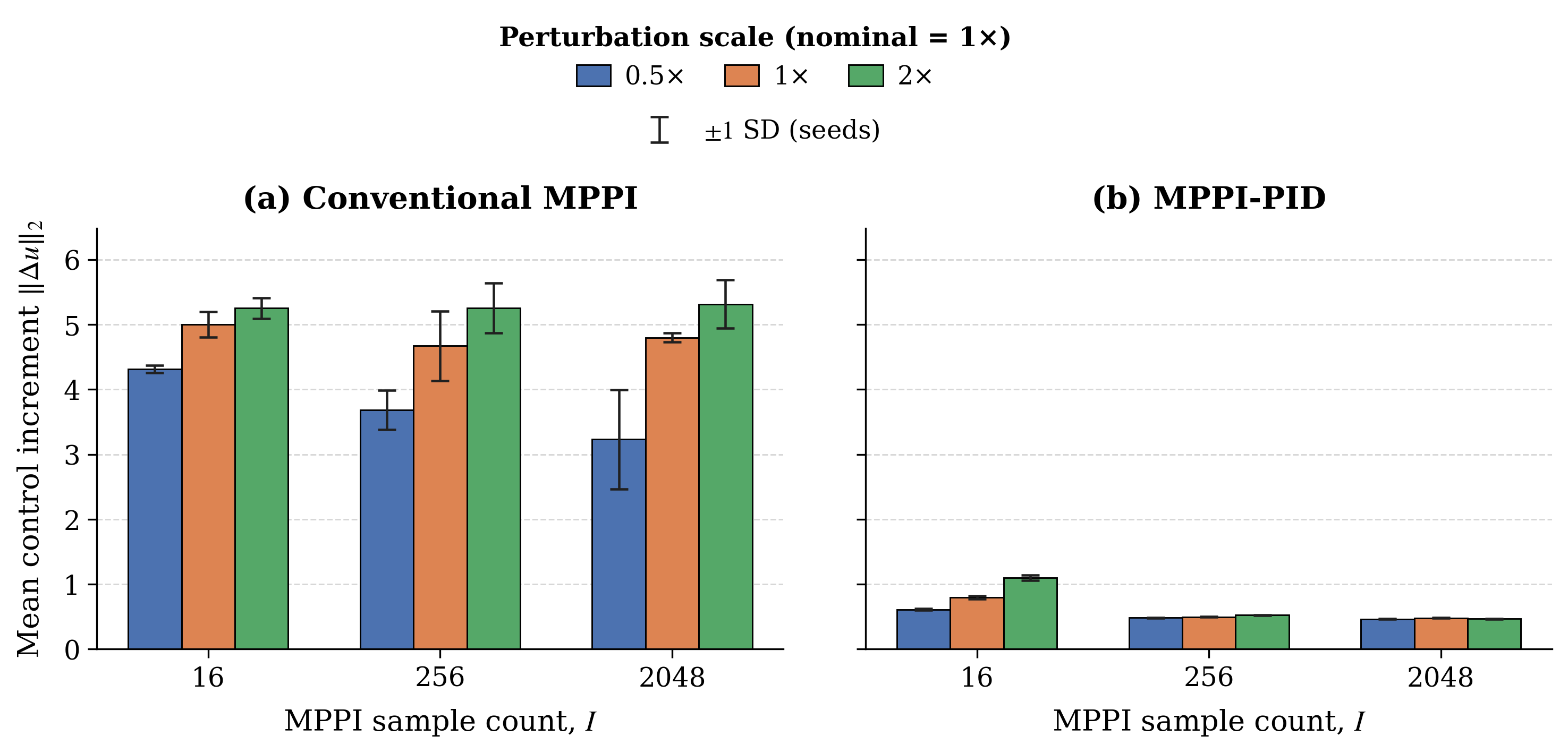}
  \caption{Mean control-input increment for different sample counts and perturbation standard deviations}
  \label{fig:auto_control_increment}
\end{figure}
For conventional MPPI, a larger perturbation standard deviation tends to produce larger control-input increments, whereas changing the sample count has a smaller effect.
MPPI--PID maintains both small control-input increments and low trial-to-trial variation across random trials for all tested conditions.
These results agree with the temporal-correlation analysis of the PID-induced input perturbations.

Figure~\ref{fig:auto_computation_time} shows the mean computation time per control step.
\begin{figure}[!htbp]
  \centering
  \includegraphics[width=0.99\linewidth]{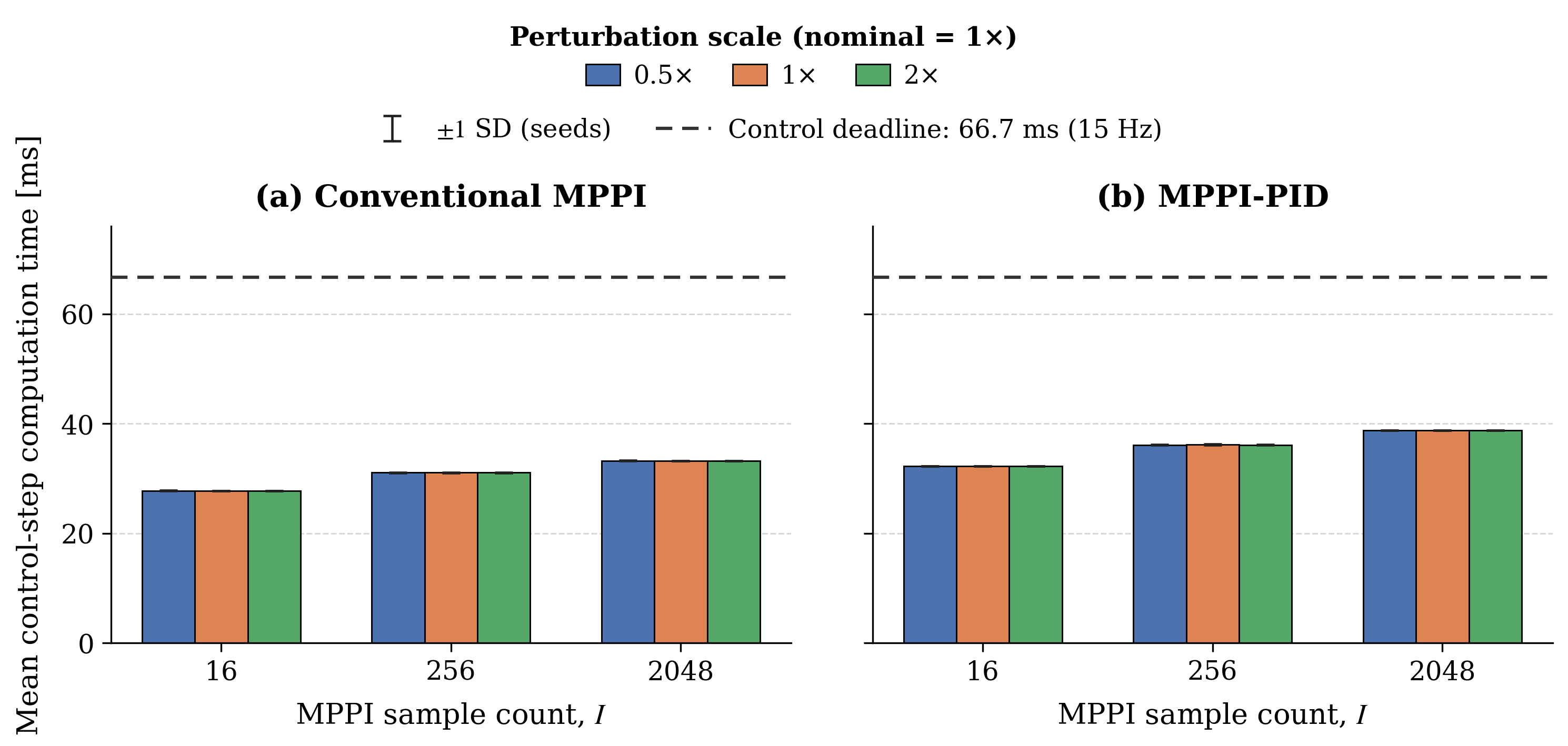}
  \caption{Mean computation time per control step for different sample counts and perturbation standard deviations}
  \label{fig:auto_computation_time}
\end{figure}
The computation time increases with the number of samples and is largely unaffected by the perturbation standard deviation and random trial.
MPPI--PID requires slightly more computation time than conventional MPPI because it computes the control-input sequence from sampled PID gains, but all tested conditions remain below the 66.7-ms control deadline corresponding to a 15-Hz operation.

\section{Discussion}
From Table~\ref{tab:r2}, the residual-learning model attains the highest average $R^2$.
Particularly, the velocities $(v^X,v^Y)$ exhibit marked improvement relative to the physical-based model, indicating that the residual neural network corrects nonlinear dynamics that are not fully captured by the physical model.
The position $(X,Y)$ and attitude $(s,c)$ already achieve high values at the physical-model stage, yet the residual-learning model slightly improves them further.
An exception arises in the yaw rate $r$, where the standard neural-network model performs best.
As position and speed errors directly influence the control performance in the path-following control, the residual-learning model, which achieved the highest overall accuracy, was adopted for control.

The control results are consistent with the theoretical considerations.
The effective-sample-size analysis suggests that reducing the optimization dimension can improve sampling efficiency.
In the present path-following setting, conventional MPPI optimizes an input sequence of dimension $n_u N = 2 \times 60 = 120$, whereas MPPI--PID optimizes the 9-dimensional PID-gain vector $\theta$.
This dimensional reduction corresponds to the low-sample and multi-condition results, where MPPI--PID maintains small path errors with fewer samples than conventional MPPI.
The temporal-correlation analysis suggests that sampling PID gains tends to produce smoother predicted-input sequences and smaller temporal-input increments.
This is reflected in the single-run time-series and broader assessment of control-input increments, where MPPI--PID suppresses input increments over a range of sample counts and perturbation standard deviations.

The fixed-gain PID control yields smooth control inputs but shows increased path deviation in the latter part of the reference path.
This result indicates the benefit of online gain optimization: MPPI--PID retains the PID structure while adapting the gains according to predicted future costs.
The computation-time results also indicate that the additional calculation required for generating inputs from sampled gains remains within the control period in the tested implementation.

\section{Conclusions}
This study proposed MPPI--PID, a receding-horizon PID-gain optimization method based on MPPI.
This method optimized PID gains rather than control-input sequences, thereby retaining the PID structure while exploiting gradient-free predictive optimization.
A unified path-integral update, an effective-sample-size analysis, and an input-smoothness analysis endorsed the expected advantages in sample efficiency and control-input smoothness.
In the learning-based mini-forklift path-following task, MPPI--PID improved tracking performance over fixed-gain PID, reduced input increments compared with conventional MPPI, and maintained favorable performance under reduced sample counts.

The scope for future work includes experimental validation on a real machine, extension to stochastic control under model uncertainty, and combination with post-filters such as safety filters for obstacle avoidance and low-pass filters.

\section*{CRediT authorship contribution statement}
Teruki Kato: Conceptualization, Methodology, Software, Validation, Formal analysis, Investigation, Data curation, Writing -- original draft, Writing -- review and editing, Visualization.
Koshi Oishi: Conceptualization, Resources, Investigation, Data curation, Writing -- review and editing, Supervision.
Seigo Ito: Conceptualization, Resources, Writing -- review and editing, Supervision, Project administration.

\section*{Declaration of competing interest}
The authors declare that they have no known competing financial interests or personal relationships that could have appeared to influence the work reported in this paper.

\section*{Acknowledgments}
This research did not receive any specific grant from funding agencies in the public, commercial, or not-for-profit sectors.

\section*{Data availability}
The data that support the findings of this study are not publicly available but may be available from the corresponding author upon reasonable request.

\section*{Appendix: Proof of the equivalence between the saturation mapping and projection}
\begin{lemma}
The saturation mapping \eqref{eq:saturation_map} coincides with the projection\linebreak $\Pi_{\calU(u_{t-1})}(u_t)$.
\begin{proof}
First, from the property of the clipping function, the following holds:
\begin{align}
    u_{t-1} + \varphi_{ [\underline{\Delta u}, \overline{\Delta u}] }^\clip(u_t - u_{t-1})
    = \varphi_{ [u_{t-1} + \underline{\Delta u},\ u_{t-1} + \overline{\Delta u}] }^\clip(u_t)
\end{align}

Therefore, the right-hand side of equation~\eqref{eq:saturation_map} becomes the following:
\begin{align}
    &\varphi_{ [\underline{u}, \overline{u}] }^\clip
    \left(
        u_{t-1} + \varphi_{ [\underline{\Delta u}, \overline{\Delta u}] }^\clip(u_t - u_{t-1})
    \right)
    = \varphi_{ [\underline{u}, \overline{u}] }^\clip
    \left(
        \varphi_{ [u_{t-1} + \underline{\Delta u},\ u_{t-1} + \overline{\Delta u}] }^\clip(u_t)
    \right) \nonumber \\
    &= \varphi_{ [ L(u_{t-1}),U(u_{t-1}) ] }^\clip(u_t)
    = \Pi_{ [L(u_{t-1}),U(u_{t-1})] }(u_t).
\end{align}
Here, we define $L(u_{t-1}):=\max(\underline{u}, u_{t-1} + \underline{\Delta u})$ and $U(u_{t-1}):=\min(\overline{u}, u_{t-1} + \overline{\Delta u})$.
We also use the equivalence between the clipping operation and projection.
The control-input constraint set $\calU(u_{t-1})$ can be written as the following:
\begin{align}
    &\calU(u_{t-1}) = \{ u\in\bbR^{n_u} | \underline{u} \leq u \leq \overline{u},\ \underline{\Delta u} \leq u - u_{t-1} \leq \overline{\Delta u} \}
    \nonumber \\
    &=\{ u\in\bbR^{n_u} | \underline{u} \leq u \leq \overline{u},\ 
    u_{t-1}+\underline{\Delta u} \leq u \leq u_{t-1}+\overline{\Delta u} \}
    \nonumber \\
    &=\{ u\in\bbR^{n_u} | L(u_{t-1}) \leq u \leq U(u_{t-1}) \}
    = [ L(u_{t-1}), U(u_{t-1}) ].
\end{align}
Thus, the right-hand side of equation~\eqref{eq:saturation_map} coincides with the required projection.
\end{proof}
\end{lemma}
\bibliographystyle{elsarticle-num}
\bibliography{references}

@article{ess_importance,
  author  = {Martino, Luca and Elvira, V{\'i}ctor and Louzada, Francisco},
  title   = {Effective sample size for importance sampling based on discrepancy measures},
  journal = {Signal Process.},
  year    = {2017},
  volume  = {131},
  pages   = {386--401},
  doi     = {10.1016/j.sigpro.2016.08.025}
}

@book{mpc_handbook,
  editor    = {Rakovi{\'c}, Sa{\v s}a V. and Levine, William S.},
  title     = {Handbook of Model Predictive Control},
  publisher = {Birkh{\"a}user},
  address   = {Cham},
  series    = {Control Engineering},
  year      = {2018},
  doi       = {10.1007/978-3-319-77489-3}
}

@book{brunton_kutz,
  author    = {Brunton, Steven L. and Kutz, J. Nathan},
  title     = {Data-Driven Science and Engineering: Machine Learning, Dynamical Systems, and Control},
  publisher = {Cambridge University Press},
  address   = {Cambridge},
  year      = {2019},
  doi       = {10.1017/9781108380690}
}

@article{mppi_information,
  author  = {Williams, Grady and Drews, Paul and Goldfain, Brian and Rehg, James M. and Theodorou, Evangelos A.},
  title   = {Information-theoretic model predictive control: theory and applications to autonomous driving},
  journal = {IEEE Trans. Robot.},
  year    = {2018},
  volume  = {34},
  number  = {6},
  pages   = {1603--1622},
  doi     = {10.1109/TRO.2018.2865891}
}

@inproceedings{mppi_stl,
  author    = {Halder, Patrick and Homburger, Hannes and Kiltz, Lothar and Reuter, Johannes and Althoff, Matthias},
  title     = {Trajectory planning with signal temporal logic costs using deterministic path integral optimization},
  booktitle = {2025 IEEE International Conference on Robotics and Automation (ICRA)},
  year      = {2025},
  pages     = {4221--4228},
  doi       = {10.1109/ICRA55743.2025.11127582}
}

@article{pi2,
  author  = {Theodorou, Evangelos A. and Buchli, Jonas and Schaal, Stefan},
  title   = {A generalized path integral control approach to reinforcement learning},
  journal = {J. Mach. Learn. Res.},
  year    = {2010},
  volume  = {11},
  number  = {104},
  pages   = {3137--3181}
}

@inproceedings{pi2_pid,
  author    = {Buchli, Jonas and Theodorou, Evangelos A. and Stulp, Freek and Schaal, Stefan},
  title     = {Variable impedance control -- a reinforcement learning approach},
  booktitle = {Proceedings of Robotics: Science and Systems},
  address   = {Zaragoza, Spain},
  year      = {2010},
  doi       = {10.15607/RSS.2010.VI.020}
}

@article{pid_mpc_1,
  author  = {Xu, Min and Li, Shaoyuan and Qi, Chenkun and Cai, Wenjian},
  title   = {Auto-tuning of {PID} controller parameters with supervised receding horizon optimization},
  journal = {ISA Trans.},
  year    = {2005},
  volume  = {44},
  number  = {4},
  pages   = {491--500},
  doi     = {10.1016/S0019-0578(07)60056-1}
}

@article{pid_mpc_2,
  author  = {Wu, Yongling and Li, Shaoyuan and Li, Kang},
  title   = {Enhanced receding horizon optimal performance for online tuning of {PID} controller parameters},
  journal = {Int. J. Model. Identif. Control},
  year    = {2018},
  volume  = {29},
  number  = {3},
  pages   = {209--217},
  doi     = {10.1504/IJMIC.2018.091239}
}

@inproceedings{pid_mpc_3,
  author    = {Gashi, Fatos and Abuibaid, Khalil and Ruskowski, Martin and Wagner, Achim},
  title     = {Model predictive control based reference generation for optimal proportional integral derivative control},
  booktitle = {2024 32nd Mediterranean Conference on Control and Automation (MED)},
  year      = {2024},
  pages     = {518--524},
  doi       = {10.1109/MED61351.2024.10566273}
}

@misc{pinn_pid,
  author       = {Ito, Junsei and Wasa, Yasuaki},
  title        = {Data-driven adaptive {PID} control based on physics-informed neural networks},
  year         = {2025},
  howpublished = {arXiv preprint arXiv:2510.04591 [eess.SY]},
  doi          = {10.48550/arXiv.2510.04591}
}

@misc{universal_de,
  author       = {Rackauckas, Christopher and Ma, Yingbo and Martensen, Julius and Warner, Collin and Zubov, Kirill and Supekar, Rohit and Skinner, Dominic and Ramadhan, Ali and Edelman, Alan},
  title        = {Universal differential equations for scientific machine learning},
  year         = {2020},
  howpublished = {arXiv preprint arXiv:2001.04385 [cs.LG]},
  doi          = {10.48550/arXiv.2001.04385}
}

@article{pavone_scaramuzza_pinn,
  author  = {Salzmann, Tim and Kaufmann, Elia and Arrizabalaga, Jon and Pavone, Marco and Scaramuzza, Davide and Ryll, Markus},
  title   = {Real-time neural {MPC}: deep learning model predictive control for quadrotors and agile robotic platforms},
  journal = {IEEE Robot. Autom. Lett.},
  year    = {2023},
  volume  = {8},
  number  = {4},
  pages   = {2397--2404},
  doi     = {10.1109/LRA.2023.3246839}
}

@inproceedings{tri_pinn,
  author    = {Ding, Nan and Thompson, Michael and Dallas, James and Goh, Jonathan Y. M. and Subosits, John K.},
  title     = {Drifting with unknown tires: learning vehicle models online with neural networks and model predictive control},
  booktitle = {2024 IEEE Intelligent Vehicles Symposium (IV)},
  year      = {2024},
  pages     = {2545--2552},
  doi       = {10.1109/IV55156.2024.10588474}
}

@misc{bosch_pinn,
  author       = {Rhode, Stephan and Jarmolowitz, Fabian and Berkel, Felix},
  title        = {Vehicle single track modeling using physics guided neural differential equations},
  year         = {2024},
  howpublished = {arXiv preprint arXiv:2403.11648 [cs.CE]},
  doi          = {10.48550/arXiv.2403.11648}
}

@article{pinn_domain_of_validity,
  author  = {Elsheikh, Mohamed and Ortmanns, Yak and Hecht, Felix and Ro{\ss}mann, Volker and Kr{\"a}mer, Stefan and Engell, Sebastian},
  title   = {Model predictive control of an industrial distillation column based on a hybrid model: adapting the domain of validity},
  journal = {IFAC-PapersOnLine},
  year    = {2023},
  volume  = {56},
  number  = {2},
  pages   = {7166--7171},
  note    = {22nd IFAC World Congress},
  doi     = {10.1016/j.ifacol.2023.10.597}
}

@inproceedings{fork_rl,
  author    = {Oishi, Koshi and Kato, Teruki and Makino, Hiroya and Ito, Seigo},
  title     = {Visual-based forklift learning system enabling zero-shot {Sim2Real} without real-world data},
  booktitle = {2025 IEEE International Conference on Robotics and Automation (ICRA)},
  year      = {2025},
  pages     = {4915--4921},
  doi       = {10.1109/ICRA55743.2025.11127682}
}

@article{samad2020industry,
  author  = {Samad, Tariq and Bauer, Margret and Bortoff, Scott and Di Cairano, Stefano and Fagiano, Lorenzo and Odgaard, Peter Fogh and Rhinehart, R. Russell and S{\'a}nchez-Pe{\~n}a, Ricardo and Serbezov, Atanas and Ankersen, Finn and Goupil, Philippe and Grosman, Benyamin and Heertjes, Marcel and Mareels, Iven and Sosseh, Raye},
  title   = {Industry engagement with control research: Perspective and messages},
  journal = {Annual Reviews in Control},
  year    = {2020},
  volume  = {49},
  pages   = {1--14},
  doi     = {10.1016/j.arcontrol.2020.03.002}
}

\end{document}